\newtheorem{lemma}{Lemma}
\newtheorem{theorem}{Theorem}
\def\eqalign#1{\null\,\vcenter{\openup\jot\m@th 
   \ialign{\strut\hfil$\displaystyle{##}$  &  $\displaystyle{{}##}$\hfil \crcr#1\crcr}}\,} 
\newcommand{\ra}[1]{\renewcommand{\arraystretch}{#1}}
\newcommand\floor[1]{\lfloor#1\rfloor}
\begin{document}

\title{On extreme points of the diffusion polytope}

\author{M. J. Hay}
\email[]{hay@princeton.edu}
\affiliation{Department of Astrophysical Sciences, Princeton University, Princeton, New Jersey 08544}

\author{J. Schiff}
\affiliation{Department of Mathematics, Bar-Ilan University, Ramat Gan, 52900 Israel}

\author{N. J. Fisch}
\affiliation{Department of Astrophysical Sciences, Princeton University, Princeton, New Jersey 08544}
\affiliation{Princeton Plasma Physics Laboratory, Princeton, New Jersey 08543}

\date{\today}

\begin{abstract}
We consider a class of diffusion problems defined on simple graphs in which the populations at any two 
vertices may be averaged if they are connected by an edge. 
The diffusion polytope is the convex hull of the set of population vectors attainable using finite sequences 
of these operations. 
A number of physical problems have linear programming solutions taking the diffusion polytope 
as the feasible region, e.g. the free energy that can be removed from plasma using waves, 
so there is a need to describe and enumerate its extreme points. 
We review known results for the case of the complete graph $K_n$, and study a variety of problems for 
the path graph $P_n$ and the cyclic graph $C_n$. We describe the different kinds of extreme points that arise,
and identify the diffusion polytope in a number of simple cases. In the case of 
increasing initial populations on $P_n$  the diffusion polytope is topologically 
an $n$-dimensional hypercube. 
\end{abstract}


\maketitle


\section{Introduction}

Consider a discrete-time conservative diffusion process on a graph. By this we mean  a connected, 
simple graph $G$ with 
vertices $\{V_i\}_{i=1}^n$, a set of initial ``populations'' $\{\rho_i\}_{i=1}^n$ at the vertices, 
and a set of rules that can be applied at each time step, with the understanding that the rules 
in some sense diffuse, or spread out the populations, while conserving the total $\sum_i \rho_i$. So, for 
example, in classical diffusion,\cite{ref1,kondor} at  each time step the populations at all the 
vertices are updated simultaneously via the rule
\begin{gather*}
\rho_i \rightarrow  \rho_i + h \sum_{j} (\rho_j - \rho_i)\, \qquad i=1,\ldots,n\ , 
\end{gather*}
where $h$ is a positive constant and the index $j$ runs over the neighbors of vertex $i$.

Chip firing games on graphs are specified in a similar fashion. At each time step the (integer) population at a vertex is reduced by $n$, the degree of vertex $i$, and the population at each of $i$'s neighbors is incremented by $1$.\cite{ref2} Sandpile models, in which vertices accumulate population until reaching a threshold and `toppling,' thereby transferring population to other nodes, follow similar rules.\cite{ref3a,ref3b}

In this paper we will examine a diffusion process in which at each time step, 
the populations at any two vertices connected by an edge can be averaged. 
In some generality, for many reasonable sets
of rules, there will a bounded set of attainable population vectors in ${\bf R}^n$. 
In various applications we may be 
interested in extremizing some linear function 
of the populations $\sum_i  w_i\rho_i$, and to do this 
(using a linear programming approach) we need to 
identify the closure of the convex hull of the set of attainable population vectors. 
We call this the {\em diffusion polytope} of the diffusion process (associated with the graph
$G$, the relevant sent of rules  and the initial set of populations). 
In some sense the diffusion polytope measures the diversity 
of behavior that can be attained in the diffusion process. 
(We emphasize that the diffusion processes we consider in this paper are limited to those described, as above, by a set of rules, leading to a finite, or at least bounded, set of accessible states. This does not include typical stochastic diffusion processes, in which extreme states are in principle accessible, albeit with very small probabilities.)

Our motivation comes from plasma physics.
There is a class of diffusion problems assoicated with opportunities in extracting energy in plasma with waves. 
Waves can be injected into a fusion reactor such that high energy alpha particles, the byproducts of the fusion reaction, 
lose energy to the waves, as those alpha particles are diffused by the waves to lower energy.\cite{fisch92,13a,tutorial}
The extra energy in the waves can then be used, for example, to increase the reactivity of the fuel or to drive electric 
current. \cite{fisch94,hay15a,natreview}
Choosing the correct sequence of waves to extract as much energy as possible is an optimization problem 
on a graph of the type described above. Because the wave-particle interaction is diffusive,\cite{sagdeev} particles in a given 
location in the 6D phase space of velocity and position are randomly mixed by the wave with particles in another location 
in the 6D phase space. A graph encodes the connectedness of the phase space; each node denotes a volume element
in phase space, and an edge between two nodes indicates that those two volume elements may be mixed.
Wave-induced diffusion follows a path in the 6D phase space corresponding to an edge in the associated graph.
If there is a population inversion in energy along the path, then the diffusive process releases energy.  
If there is only one wave and a specified diffusion path, the amount of extractable energy can be readily calculated. 
However, more energy can be extracted when several waves are employed.\cite{fisch_twowave,herrmann97}
When many diffusion paths are possible, it turns out that the order in which these paths are taken affects the 
energy that can be extracted. We reach the situation described above, of a range of possible 
population distributions on the nodes. 
(The number of particles ­­­ the total population ­­­ is conserved, energy is extracted by moving particles from nodes of high energy to nodes of low energy.)
Determining the maximum amount of extractable 
energy under the constraint that the particle distribution function evolves only due to diffusion reduces to 
a linear programming problem on the diffusion polytope, the convex hull of all attainable 
population vectors, and we are concerned with identifying its extreme 
points and the edge sequences that give rise to these extreme points. 

Arguably the simplest case of this diffusion problem is to allow, at every step, 
averaging  of the populations at any two nodes.   
This is the full-connectivity, or the {\em nonlocal  diffusion} problem, in which diffusion paths can be constructed between 
any two phase space locations, and the relevant graph is the complete graph $K_n$ 
(see Figure~\ref{fig:graphs}).  In the context of plasma, 
there are physical reasons why this arrangement is realizable  on a macroscopic 
scale, despite the restriction of diffusion to 
contiguous regions of phase space on the microscopic scale. \cite{fisch93}
In this case the diffusion polytope, and the maximum energy extractable, or what we call the {\it free energy}, 
have been described previously.\cite{hay15} The same optimization problem has been discussed in other fields, 
in the context of attainable states in chemical reactions and thermal processes\cite{horn64,zylka85}, and 
in the context of altruism and wealth distribution.\cite{thonwallace}

However in all these settings, there are arguments to restrict the connectivity. In the context of plasma,
possible reasons for restriction include that waves can only diffuse particles from one phase space 
position to a contiguous position,
or between pairs of states determined by selection rules. 
In the case of such a {\it local diffusion} problem, the free energy will be less, because there are fewer ways 
in which the energy might be released. 
As the simplest example of such a local diffusion problem we study diffusion when the connectivity is restricted 
from that of $K_n$ to that  of the path graph $P_n$ (see Figure~\ref{fig:graphs}). 
In the context of alruism, 
the effects of other (retrospective) restrictions on the connectivity have also been studied.\cite{aboudithon}

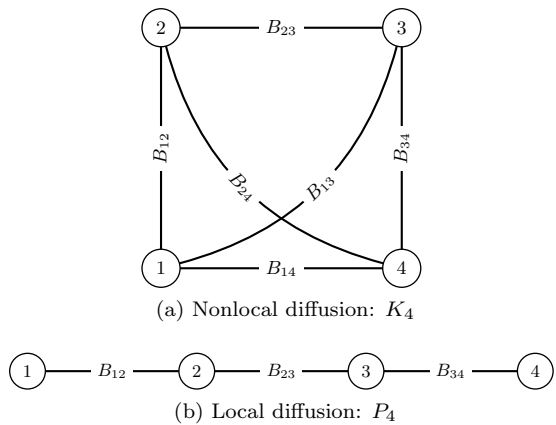
\begin{figure}

\subfloat[Nonlocal diffusion: $K_4$]{%
\begin{tikzpicture}[scale=.8,transform shape]
  \Vertex[x=0,y=0]{1}
  \Vertex[x=0,y=4]{2}
  \Vertex[x=4,y=4]{3}
  \Vertex[x=4,y=0]{4}
  \tikzstyle{LabelStyle}=[fill=white,sloped]
  \Edge[label=$B_{12}$](1)(2)
  \Edge[label=$B_{23}$](2)(3)
  \Edge[label=$B_{34}$](3)(4)
  \Edge[label=$B_{14}$](1)(4)
\tikzstyle{EdgeStyle}=[bend right]
  \Edge[label=$B_{13}$](1)(3)
  \Edge[label=$B_{24}$](2)(4)
\end{tikzpicture}
}

\subfloat[Local diffusion: $P_4$]{%
\begin{tikzpicture}[scale=0.75,transform shape]
  \Vertex[x=0,y=0]{1}
  \Vertex[x=3,y=0]{2}
  \Vertex[x=6,y=0]{3}
  \Vertex[x=9,y=0]{4}
  \tikzstyle{LabelStyle}=[fill=white,sloped]
  \Edge[label=$B_{12}$](1)(2)
  \Edge[label=$B_{23}$](2)(3)
  \Edge[label=$B_{34}$](3)(4)
\end{tikzpicture}
}
\caption{Graph representations diffusion problems two four-level systems, $K_4$ and $P_4$. 
The marking  $B_{ij}$ on an edge indicates that the populations of nodes $i$ and $j$ can be 
equalized. \label{fig:graphs}}

\end{figure}

The path graph context $P_n$ arises naturally when the 6D phase space is projected to a 1D energy representation, and only 
transitions between adjacent energies are allowed. 
Many physical problems of interest are captured by the model of  contiguity based on energy only.
Other network problems can be defined which capture the spatial element.\cite{zhmoginov08} 
A different problem on $P_5$ arises in the context of 
maximizing  the possible concentration of atoms in a specific state in an  ensemble of atoms of helium.
Fig.~\ref{fig:helium} shows a truncated level diagram for parahelium ($S=0$). Due to level splitting, each 
energy level is associated with a unique energy. We suppose  processes are available  which can mix levels joined 
by a dipole transition (such as spatially incoherent light at the appropriate frequency).  
For example, it is possible to average the number of atoms in $2s$ and $3p$, but it is not possible to do the same 
for $2s$ and $3s$. Thus four operators are allowed in this five-level system, $1s\leftrightarrow 2p$, 
$1s\leftrightarrow 3p$, $2s\leftrightarrow 3p$, $3s\leftrightarrow 2p$.  We gain a clearer picture by 
redrawing the energy level diagram of Fig.~\ref{fig:helium} with vertices relabelled 
$1=1s, 2=2s, 3=2p, 4=3s, 5=3p$ (so the vertex label indicates the energy rank) to obtain Fig.~\ref{fig:heliumgraph}. 
Thus we see this also gives a diffusion problem on $P_n$; however the allowed transitions are {\em not} between adjacent 
energies. 

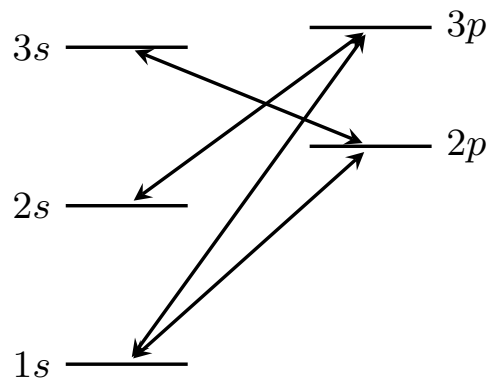
\begin{figure}
\centerline{
  \resizebox{7cm}{!}{
    \begin{tikzpicture}[
      scale=0.5,
      level/.style={thick},
      virtual/.style={thick,densely dashed},
      trans/.style={thick,<->,shorten >=2pt,shorten <=2pt,>=stealth},
      classical/.style={thin,double,<->,shorten >=4pt,shorten <=4pt,>=stealth}
    ]
    \draw[level] (2cm,-10em) -- (0cm,-10em) node[left] {$1s$};
    \draw[level] (2cm,-2em) -- (0cm,-2em) node[left] {$2s$};
    \draw[level] (4cm,1em) -- (6cm,1em) node[right] {$2p$};
    \draw[level] (2cm,6em) -- (0cm,6em) node[left] {$3s$};
    \draw[level] (4cm,7em) -- (6cm,7em) node[right] {$3p$};
    \draw[trans] (1cm,6em) -- (5cm,1em) node[midway,left] { };
    \draw[trans] (1cm,-10em) -- (5cm,1em) node[midway,left] { };
    \draw[trans] (1cm,-10em) -- (5cm,7em) node[midway,left] { };
    \draw[trans] (1cm,-2em) -- (5cm,7em) node[midway,left] { };
    \end{tikzpicture}
  }
}
\caption{Truncated parahelium energy level diagram with all possible electric dipole transitions
 represented by arrows, $\leftrightarrow$. Energy scale is arbitrary. The $s$-$p$ splitting 
is due chiefly to the partially screened Coulomb repulsion of the nucleus (a larger effect in $p$ 
orbitals vs. $s$ orbitals).\cite{schwabl}\label{fig:helium}}
\end{figure}

\begin{figure}
 \begin{tikzpicture}[scale=1,transform shape]
  \Vertex[x=0,y=0]{1}
  \Vertex[x=-2,y=-2]{3}
  \Vertex[x=-2,y=-4]{4}
  \Vertex[x=2,y=-2]{5}
  \Vertex[x=2,y=-4]{2}
  \tikzstyle{LabelStyle}=[fill=white,sloped]
  \Edge[label=$B_{13}$](1)(3)
  \Edge[label=$B_{15}$](1)(5)
  \Edge[label=$B_{34}$](4)(3)
  \Edge[label=$B_{25}$](5)(2)
\end{tikzpicture}
\caption{Graph representation for the diffusion problem on parahelium, cf. Fig.~\ref{fig:helium}. 
Vertices are labeled by the rank of the corresponding energy eigenvalue (increasing). The 
graph is $P_5$, but the transitions are not between adjacent energy levels.\label{fig:heliumgraph}}
\end{figure}

This paper proceeds as follows: 
In Section \ref{sec:model}  we give the precise statement of the diffusion
model we study, the definition of the diffusion polytope, and state some elementary facts.  
In Section \ref{sec:oldresults}  we review known results for the 
case of the complete graph $K_n$,\cite{hay15,horn64,zylka85,thonwallace}. We show that for a general graph, 
whenever $3$ vertices are connected to each other (i.e. form a triangle), extreme points of the diffusion 
polytope are obtained only by averaging over the pairs with consecutive populations. This generalizes a 
theorem of Thon and Wallace for case of the complete graph, and is a key result in characterizing the 
diffusion polytope. 
In Sec.~\ref{sec:localstate}, we study the diffusion polytope for the path graph $P_n$. 
There are different cases depending on the ordering of the initial populations. In the case $n=3$ we describe 
the solution in all cases, emphasizing the location of the resulting polytopes inside the $K_3$ polytope,
and the different kind of extreme points that arise.  In the case of $P_n$ with ordered initial 
populations we show the diffusion polytope is topologically 
an $(n-1)$-dimensional hypercube with $2^{n-1}$ vertices. 
Whereas the extreme points of the $K_n$ nonlocal problem can all be constructed by ${n\choose 2}$ or 
fewer level mixings, some extreme points in the $P_n$ local problem are only reachable 
by an infinite sequence of operations. Curiously, the number extreme points in the $P_n$ local problem 
that are inherited from the nonlocal problem is a Fibonacci number, and the number of operations 
required to reach them is at most $\floor{n/2}$.   
Sec.~\ref{sec:cycle} extends the analysis to diffusion on the cycle graph $C_n$, again focusing on the 
new types of extreme points that become available. 
We summarize in Sec.~\ref{sec:discussion}, 
and present open questions 
concerning  more physically relevant graphs, 
and connections between ideas presented in this paper and other notions in modern network theory. 


\section{The Diffusion Model\label{sec:model}}

The diffusion model studied in this paper is as follows: We are given a connected, 
simple graph $G$ with vertices $\{V_i\}_{i=1}^n$, 
and a set of initial populations $\{\rho_i\}_{i=1}^n$ at the vertices. 
We assume without loss of generality that the population vector $\rho=(\rho_1,\rho_2,\ldots,\rho_n)$ 
is normalized: $\sum_i \rho_i = 1$. 

To any edge in $G$ we associate an operator. If the edge connects vertices $V_i$ and $V_j$ we indicate
this operator $B_{ij}$, and this acts on the populations at the vertices $i$ and $j$ via
$$  B_{ij} : (\rho_i,\rho_j) \rightarrow \left( \frac12(\rho_i+\rho_j), \frac12(\rho_i+\rho_j) \right) $$ 
while leaving the populations at all the other vertices unchanged. 

We can write $B_{ij} = \frac12 \left(I + Q_{ij} \right)$ where $Q_{ij}$ is the operator that permutes
the populations at the $i$'th and $j$'the vertices. In greater 
generality we could consider the action of operators 
$B_{ij;\alpha} = (1-\alpha) I + \alpha Q_{ij}$  for all $\alpha\in[0,\frac12]$. This is the case in which 
``partial relaxation'' is allowed as well as full relaxation. However, since we will only consider 
the convex hull of the population vectors generated by the $B_{ij}$, it is clear that this does not make 
any difference. However, it is important to distinguish between the cases of $0\leq\alpha\leq1/2$ and 
$1/2<\alpha\leq1$; the latter case corresponds to inversion of populations, and are not allowed. 

We assume we are given an objective function $f=\sum_i  w_i\rho_i$ which is to be extremized over the set 
$A({ \rho}_0)$ of attainable states, i.e. 
populations generated by finite sequences  of the operators $B_{ij}$ from the initial population vector ${ \rho}_0$. 
The weights $w_i$ are taken to be all positive 
and distinct. Without loss of generality we can assume either $w_1<w_2<\cdots<w_n$ or that the components of 
${\rho}_0$ 
satisfy $\rho_1\leq\rho_2\leq\cdots\leq\rho_n$. Due to the linearity of $f$, this problem has a linear 
programming solution on $DP=\overline{ch\left(A({\rho}_0) \right)}$, the closure of the complex hull of 
$K({ \rho}_0)$. We call this the {\em diffusion polytope} of the problem; it is determined by the graph $G$ and the 
initial population vector ${ \rho}_0$.  

Since we have assumed $G$ is connected, the uniform population  $\rho = \left( \frac1{n}, \frac1{n}, \ldots, \frac1{n} \right)$ 
is in $DP$. (To prove this, observe that the quantity ${\rm max}_{i,j}(\rho_i-\rho_j)$ is a strictly decreasing function under
the application of the averaging operations, and it cannot have a non-zero minimum.) 

Another immediate property of $DP$ is that if the graph $G'$ can be obtained from $G$ by deletion of one or more 
edges (while still staying connected) then $DP(G')\subseteq DP(G)$. Here the assumption is that we start with the
same population vector on both $G$ and $G'$. However, since $G'$ has less edges, the set of attainable states is smaller 
compared to that for $G$ (and in the plasma setting the free energy is reduced). Thus the diffusion polytope of 
{\em every} graph with $n$ vertices is a subset of the diffusion polytope for $K_n$. Reducing the connectivity will 
restrict the diffusion polytope. This gives, for example, a way to identify ``important'' edges in a graph,
as edges whose elimination causes a significant restriction diffusion polytope, or to define robustness of 
a network in terms of how the diffusion polytope responds to removal of edged, c.f. Refs.~\onlinecite{ref4a,ref4b}.


\section{Diffusion on $K_n$: Nonlocal Diffusion\label{sec:oldresults}}

For the case of $K_n$, the case of nonlocal diffusion, Ref.~\onlinecite{thonwallace} presented a recursive algorithm to identify 
the extreme points of the diffusion polytope by applying different sequences of the $B_{ij}$. It was noted in 
Ref.~\onlinecite{hay15} that this algorithm generates reduced (minimal length) decompositions of the elements in the 
symmetric group $S_n$.\cite{oeis_reduced} That is, the algorithm identifies every possible minimum-length way to generate 
each of the $n!$ permutations of a length-$n$ word using only adjacent transpositions $\sigma_i=(i\,\,i+1)$. 
It turns out that the  nonlocal extreme points are in bijection with equivalence classes of reduced decomposition, 
the equivalence classes being the sets  of reduced decompositions obtainable from each other by the 
applying the commutation  relation $\sigma_i\sigma_j=\sigma_j\sigma_i$, $|i-j|>1$.\cite{berkolaiko16}

It was noted that `dead ends' exist among the possible sequences of diffusion operations, where a state is 
reached with level densities decreasing with level energy, such that no more energy can be extracted. Any such 
{\it stopping} state has the level population permutation which is the reverse of the energy level permutation. 
For example, given $w=(w_1,w_2,w_3)$ with $w_2<w_1<w_3\sim\{2,1,3\}$, the {\it stopping permutation} is $\{3,1,2\}$, 
such that $\rho_3\leq \rho_1\leq \rho_2$. 

In order to identify the extremal sequence of diffusion operations resulting in a minimum-energy state, 
it is generally necessary to evaluate the objective function for each inequivalent reduced decomposition of 
the stopping permutation. For the worst-case reverse permutation, there are a large number of states to 
check.\cite{oeis_networks} We emphasize that each extreme point in the nonlocal problem is associated 
with a finite sequence of diffusion operations, with the limiting words being the identity (length 0) 
and the reverse permutation (length ${n\choose 2}$).

Finally, all extremal sequences result in a monotone trend in the objective function: in our plasma 
example, we can exclude from consideration any operations which absorb energy from the injected waves.


A signficant tool for these results was Prop.2 in Ref.~\onlinecite{thonwallace} and this has an
extension to the diffusion problem on an arbitrary graph. Suppose the $3$ vertices $V_i,V_j,V_k$ form 
a triangle, i.e. that there are edges between the $3$ possible pairs of these three vertices. Assume 
without loss of generality that $\rho_i<\rho_j<\rho_k$. Then {\em no extreme point can be obtained by 
immediate application of $B_{ik}$}. In other words, in any triangle, extreme points can only 
generated by averaging pairs with adjacent populations. This fact folows from  two following simple identities:
\begin{eqnarray*}
&& \left( \begin{array}{ccc}
\frac12 & 0 & \frac12 \\ 
0 & 1 & 0  \\
\frac12 & 0 & \frac12 
\end{array} \right) 
\left( \begin{array}{c} 
a \\ b \\ c 
\end{array} \right)  \\
&=& 
\lambda_1
\left( \begin{array}{c} 
\frac13 \\ \frac13 \\\frac13 
\end{array} \right) 
+ 
(1-\lambda_1)   
\left( \begin{array}{ccc}
\frac12 & 0 & \frac12 \\ 
0 & 1 & 0  \\
\frac12 & 0 & \frac12 
\end{array} \right) 
\left( \begin{array}{ccc}
1 & 0 & 0 \\ 
0 & \frac12 & \frac12  \\
0 & \frac12 & \frac12 
\end{array} \right) 
\left( \begin{array}{c} 
a \\ b \\ c 
\end{array} \right)  \\
&=& 
\lambda_2
\left( \begin{array}{c} 
\frac13 \\ \frac13 \\\frac13 
\end{array} \right) 
+ 
(1-\lambda_2)   
\left( \begin{array}{ccc}
\frac12 & 0 & \frac12 \\ 
0 & 1 & 0  \\
\frac12 & 0 & \frac12 
\end{array} \right) 
\left( \begin{array}{ccc}
\frac12 & \frac12  & 0 \\
\frac12 & \frac12  & 0 \\
0 & 0 & 1 
\end{array} \right) 
\left( \begin{array}{c} 
a \\ b \\ c 
\end{array} \right) 
\end{eqnarray*}
Here it is assumed that $a<b<c$ and $a+b+c=1$, and
$$ 
\lambda_1 =  \frac{3(c-b)}{b+c-2a} \quad {\rm and} \quad 
\lambda_2 =  \frac{3(b-a)}{2c-a-b} \ ,
$$ 
so
$$ 
1-\lambda_1 =  \frac{2(2b-a-c)}{b+c-2a} \quad {\rm and} \quad 
1-\lambda_2 =  \frac{2(a+c-2b)}{2c-b-a} \ . 
$$ 
If $a+c\le 2b$ then $0\le \lambda_1 \le 1$,  and the first identity says that the 
population obtained by application of $B_{ik}$ is a convex combination of 
the population obtained by averaging all three populations at $V_i,V_j,V_k$, with
the population obtained by application of first $B_{jk}$ and then $B_{ik}$. The latter two 
populations may or may not be extreme points of $DP$; but they are both in $DP$, 
and thus the population obtained by immediate application of $B_{ik}$ is certainly
{\em not} an extreme point. If $a+c>2b$ then $0\le \lambda_2\le 1$ and the 
second identity is relevant, and the population obtaiined by immediate application of
$B_{ik}$ is a convex combination of the population obtained by full averaging, with
that obtained by first applying $B_{ij}$ and then $B_{ik}$. 

In contrast, the other crucial result for understanding the case of $K_n$, viz. 
Prop.3 in Ref.~\onlinecite{thonwallace}, seems to be specific to $K_n$.


\section{Diffusion on $P_n$: Local Diffusion\label{sec:localstate}}

Throughout our discussion of the $P_n$ case we assume without loss of generality that 
$\rho_1\leq\rho_2\leq\cdots\leq\rho_n$. 

For $n=3$ there are three distinct cases to consider, when
the allowed operators are (a) $B_{12},B_{23}$, (b) $B_{12},B_{13}$, (c) $B_{13},B_{23}$. 
Figure~\ref{fig:hulls} compares the diffusion polytopes in the three different cases with that
of $K_3$, in the case $\rho_0=(0,2/7,5/7)$. 

\begin{figure}

\subfloat[$B_{12}$, $B_{23}$ allowed ($P_3$)]{%
\includegraphics[width=40mm]{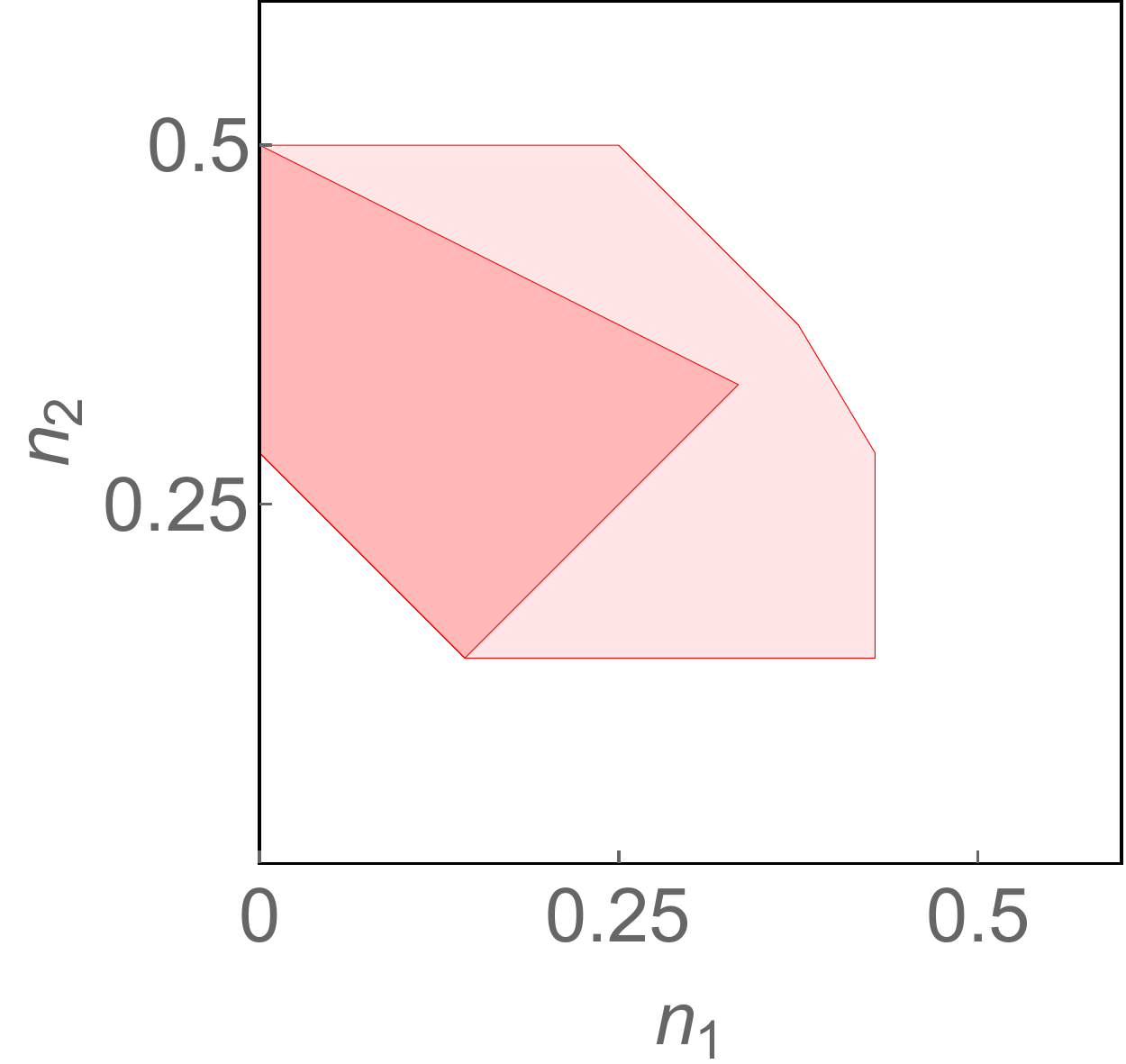}
}

\subfloat[$B_{12}$, $B_{13}$ allowed]{%
\includegraphics[width=40mm]{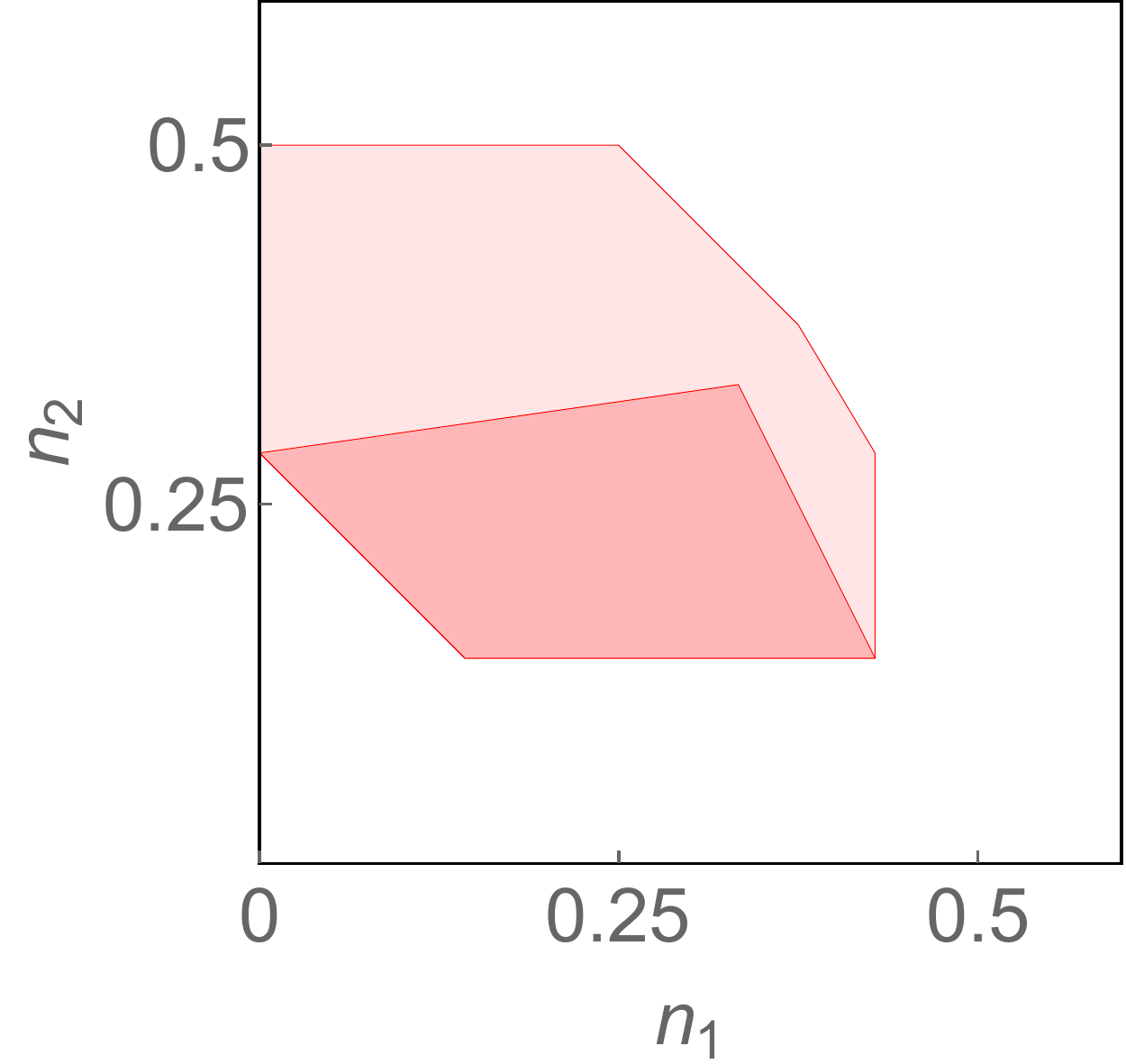}
}

\subfloat[$B_{13}$, $B_{23}$ allowed]{%
\includegraphics[width=40mm]{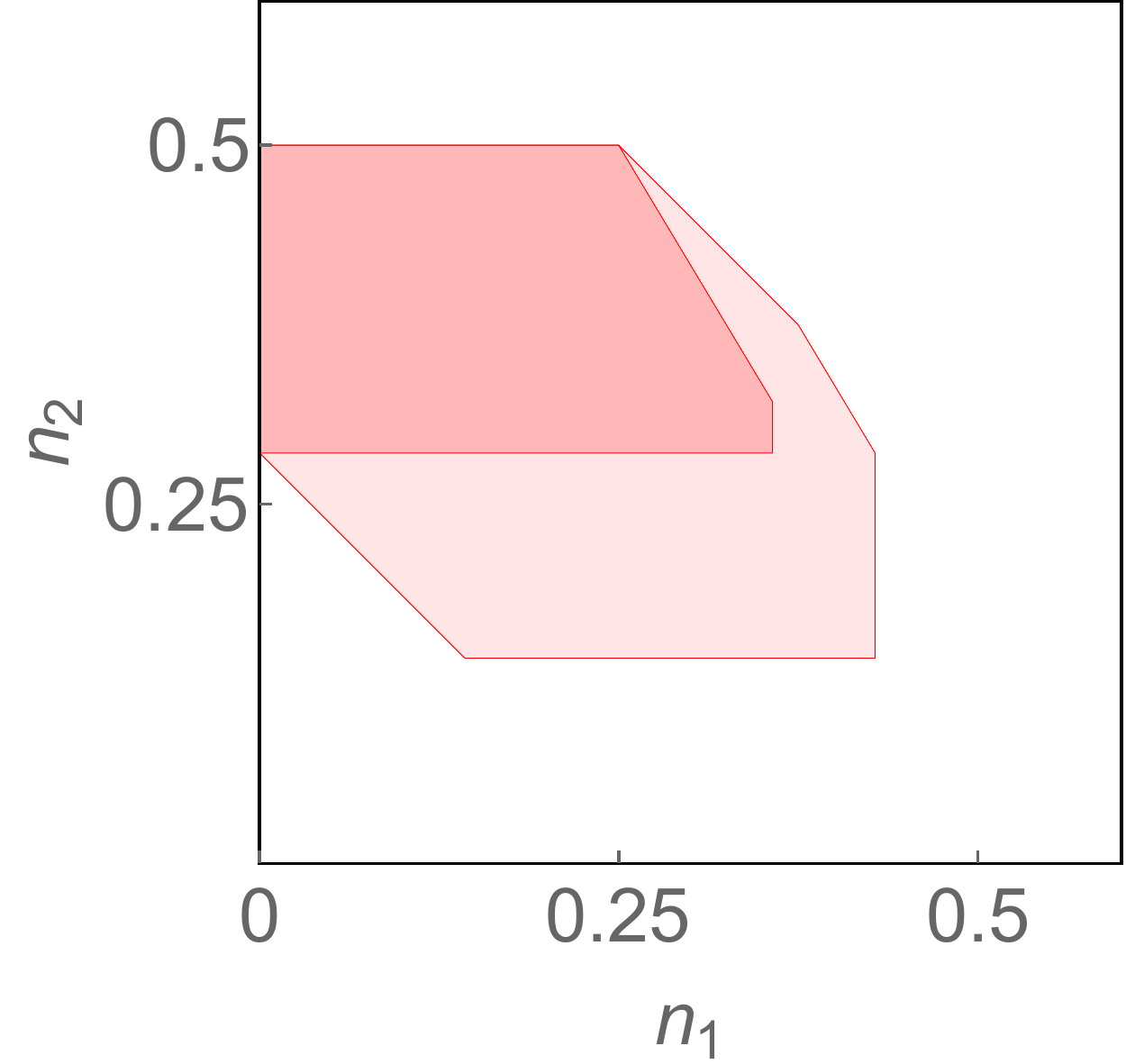}
}

\caption{Comparison of polytopes for $K_3$, $P_3$, and two other restricted graphs for initial data $\rho_0=(0,2/7,5/7)$,
with permitted operators (a) $B_{12},B_{23}$, (b) $B_{12},B_{13}$, (c) $B_{13},B_{23}$. All are superimposed on the polytope for $K_3$.
(Due to normalization, the third coordinate is ignorable.) 
 \label{fig:hulls}}

\end{figure}

The extreme points in the case $K_3$ are 
$\rho_0,\   \rho_0 B_{12},\  \rho_0 B_{23},\   \rho_0 B_{12}B_{13},\ \rho_0 B_{23}B_{13},\ 
\rho_0 B_{12}B_{13}B_{23},$  $\rho_0 B_{23}B_{13}B_{12}$. 
Writing $\bar{\rho}=\left( \frac13 ,\frac13 ,\frac13 \right)$, 
the extreme points in the three $P_3$ cases are 
\begin{itemize}  
\item $\rho_0,  \rho_0 B_{12}, \rho_0 B_{23}, \bar{\rho}$. 
\item $\rho_0,  \rho_0 B_{12}, \rho_0 B_{12}B_{13}, \bar{\rho}$. 
\item $\rho_0,  \rho_0 B_{13}, \rho_0 B_{23}, \rho_0 B_{13}B_{23}, \rho_0 B_{23}B_{13} $. 
\end{itemize}  
Note first that any $K_3$ extreme point that can be attained in any of 
the $P_3$ cases is an extreme point in that case. We call such points ``nonlocal
extreme points'' (of the local problem), as they are inherited form the 
nonlocal problem. However, there are also new extreme points. 
In the first two cases the point $\bar{\rho}$ is added. This is a limit point of the 
attainable populations --- it cannot be attained by application of a finite sequence of 
the $B_{ij}$ operators; we call such points ``asymptotic extreme points''.  
In the third case there are new extreme points that {\em can} 
be attained by application of a finite sequence of the $B_{ij}$, however these are not 
extreme points for the nonlocal $K_3$ case. In more general local diffusion problems all 
three kinds of extreme points coexist --- nonlocal extreme points inherited from $K_n$, 
asymptotic extreme points (which do not appear in the case $K_n$) and other extreme 
points generated by finite sequences of operations that are {\em not} inherited from
$K_n$. 

In the case $P_n$ in which the allowed operators are $B_{12},B_{23},\ldots,B_{n-1,n}$ the 
diffusion polytope can be identified explicitly. In the appendix, we prove that there  
are $2^{n-1}$ extreme points in bijection with the power 
set of $\{1,2,\ldots,n-1\}$. 
It follows that the diffusion polytope is topologically an $(n-1)$-dimensional hypercube.\cite{mathworldhyper}
Any extreme point corresponding to a subset $A\subseteq \{1,2,\ldots,n-1\}$ is connected 
(by edges, forming the 1-skeleton of the hypercube) to $n-1$ other extreme points corresponding
to the $n-1$ subsets that differ from $A$ in just one element. 
Fig.~\ref{fig:3d} illustrate the  3-cube hull for the four-level problem. 

\begin{figure}
\centering
\includegraphics[width=70mm]{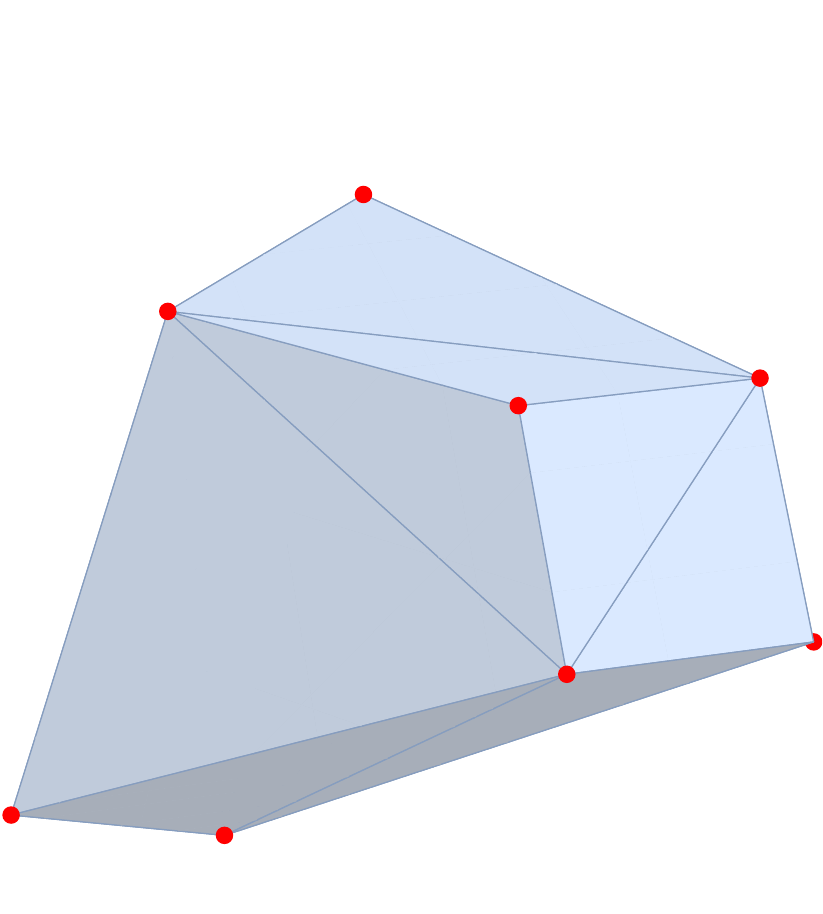}
\caption{\label{fig:3d} Convex hull of a four-level local diffusion problem represented in $\mathbb{R}^3.$ 
Extreme points are denoted with red circles.}
\end{figure}

We ask the question how many nonlocal extreme points are there in this case? (i.e. how many points are 
inherited from the case of the complete graph $K_n$.)  In the case $n=3$ there are $3$: 
$\rho_0$, $\rho_0B_{12}$, and $\rho_0B_{23}$ are all extreme points. In the case $n=4$ there are
$5$: $\rho_0$, $\rho_0B_{12}$, $\rho_0B_{23}$, $\rho_0B_{34}$ and $\rho_0B_{12}B_{34}$. In general, the
question is how many subsets of {\em commuting} operators are there in
$\{B_{12},B_{23},\ldots,B_{n-1,n}\}$?  For $n=4$ and $n=5$ only commuting 2-tuples are possible. 
For $n=6$, a commuting 3-tuple appears: $(B_{12},\,B_{34},\,B_{56})$. In general, the $n$-level system 
contains only $k$-tuples satisfying $k\leq\floor{n/2}$.

Clearly, as $n$ grows, the number of commuting $k$-tuples with $k\leq\floor{n/2}$ 
becomes large and direct counting becomes tedious, if not difficult. Fortunately, a general formula is 
available. Appropriating the notation of Ref.~\onlinecite{erdos76}, denote the number of 
commuting $k$-tuples as $A_k(n)$. Recalling that there are $n-1$ operators $B_{i,i+1}$ in the $n$-level problem, 
the number of extreme points for $n>2$ levels is
\begin{gather}
1+(n-1)+A_2(n)+A_3(n)+\cdots +A_{\floor{n/2}}(n),
\end{gather}
where the leading $1$ corresponds to the initial distribution $\rho_0$.

We can now attack the $A_k(n)$ in turn. Mapping each $B_{i,i+1}$ to the symbol $i$, $A_2(n)$ is the number 
of two-element subsets of $\{1,2,\ldots,n-1\}$ which do not contain consecutive numbers. The total number 
of two-element subsets is ${n-1\choose 2}$ and the number of subsets containing consecutive numbers is $n-2$. Therefore 
\begin{gather}
A_2(n)={n-1\choose 2}-(n-2)= \frac12 (n-1)(n-2) =  T_{n-3}
\end{gather}
where $T_n$ is the $n^{\rm th}$ triangular number (recall that the expression is restricted to $n>2$). 
Proceeding analogously, $A_3(n)$ is seen to correspond to the tetrahedral numbers. 
In general, $A_k(n)$ can be identified with the set of regular $k$-polytopic numbers. \cite{deza12}
With the proper offsets, the formula for the number of extreme points for $n>2$ is
\begin{gather}
n+{n-2\choose 2}+{n-3\choose 3}+\cdots=F_{n+1},
\end{gather}
where $F_n$ is the $n^{\rm th}$ term of the Fibonacci sequence: $0,\,1,\,1,\,2,\,3\ldots\,$ with $F_0 = 0$. 
The identity is the statement that shallow diagonals of Pascal's triangle sum to Fibonacci numbers:\cite{mathworldpascal}
\begin{gather}
\sum_{k=0}^{\floor{n/2}}{n-k \choose k} = F_{n+1}.
\end{gather}
(Note that ${n\choose 0}+{n-1\choose 1}=n.$)

This result might have been anticipated because there are $F_{n+2}$ unique subsets of $\{1,\ldots,n\}$ which 
do not contain consecutive numbers.\cite{comtet} For example, there are three ($=F_4$) such subsets of 
$\{1,\,2\}$: $\{\varnothing,\,\{1\},\,\{2\}\}$. The problem of enumerating the extreme points is analogous: 
the $n$-level system contains $n-1$ operators, leading to $F_{n+1}$ Fibonacci subsets.

The Fibonacci numbers are the solution to a similar problem in graph theory: $F_{n+1}$ is the number of matchings in a 
path graph with $n$ vertices.\cite{prodinger82}

Thus the number of nonlocal extreme points in the case of $P_n$ with operators $\{B_{12},B_{23},\ldots,B_{n-1,n}\}$ 
is $F_{n+1}$. Note these involve at most $\floor{n/2}$ operators (as opposed to up to ${n\choose 2}$ in the case 
of $K_n$). All the other extreme points are asymptotic extreme points, involving averagings over $3$ or more states. 



\section{Diffusion on the Cycle Graph $C_n$: Nonlocal Diffusion\label{sec:cycle}}

Another possible restriction on the phase space connectivity results in a diffusion problem on the 
cycle graph $C_n$, Fig.~\ref{fig:cyclic} with  allowed operators $\{B_{12},B_{23},\ldots,B_{n-1,n}\}$ 
as in the case of $P_n$ studied before, and  the single extra operator $B_{1,n}$ 
$C_3$ is isomorphic to $K_3$ and accordingly has the same diffusion polytopes; $C_4$ is 
the smallest case with a unique diffusion polytope. 

Introducing the notation $B_{ijk}$ for the operation of averaging over populations on the three 
vertices $i,j,k$ (involving an infinite sequence of operations), Table~\ref{table} lists the 18 extreme 
points we have found for the $C_4$ diffusion polytope by brute force computation. We divide the 
points into $3$ categories: those inherited from $K_n$, those shared with $P_4$ (by this we mean that 
these are extreme points for $P_4$ inherited from the $C_4$ case), and all others. In the 
case of $C_4$, as in the case of $P_n$ that we solved explicitly, there are only nonlocal extreme points and 
asymptotic extreme points. We emphasize that in general there are also points involving a finite 
sequence of $B_{ij}$ that are {\em not} inherited from $K_n$. 

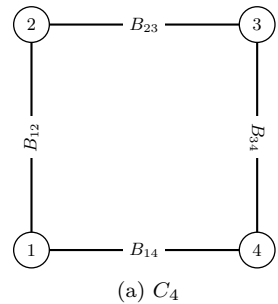
\begin{figure}
\subfloat[$C_4$]{%
 \begin{tikzpicture}[scale=0.75,transform shape]
  \Vertex[x=0,y=0]{1}
  \Vertex[x=0,y=4]{2}
  \Vertex[x=4,y=4]{3}
  \Vertex[x=4,y=0]{4}
  \tikzstyle{LabelStyle}=[fill=white,sloped]
  \Edge[label=$B_{12}$](1)(2)
  \Edge[label=$B_{23}$](2)(3)
  \Edge[label=$B_{34}$](3)(4)
  \Edge[label=$B_{14}$](1)(4)
\end{tikzpicture}
}
\caption{The cycle graph $C_4$\label{fig:cyclic}}
\end{figure}

\begin{table}\centering
\ra{1.3}
\noindent\begin{tabular}{*{3}{c}}
from $K_4$ & shared $P_4$ & $C_4$ only \\
\hline
$\rho_0$  &  $\rho_0 B_{123}$  & $\rho_0 B_{123}B_{14}$ \\
$\rho_0 B_{12}$  &  $\rho_0 B_{234}$  & $\rho_0 B_{234}B_{14}$ \\
$\rho_0 B_{23}$  &  &$\rho_0 B_{123} B_{124}$\\
$\rho_0 B_{34}$  &   &   $\rho_0 B_{234}B_{134}$\\
$\rho_0 B_{12} B_{34}$&   &$\rho_0 B_{12} B_{34} B_{134}$  \\
$\rho_0 B_{12} B_{34} B_{14}$  &   &$\rho_0 B_{12} B_{34} B_{124}$\\
 & &$\rho_0 B_{234}B_{14}B_{12}$ \\
 &  &$\rho_0 B_{123}B_{14}B_{234}$   \\
 & &$\rho_0 B_{234}B_{14}B_{123}$  \\
  &&$\rho_0 B_{123}B_{23}B_{14}B_{34}$  \\
\bottomrule
\end{tabular}
\caption{Extreme points for the diffusion problem on $C_4$.\label{table}}
\end{table}


\section{Discussion\label{sec:discussion}}

In this work we have developed a non-standard diffusion model on a graph, explained the reason for looking at the 
associated diffusion polytope, and studied this in the cases where the graph is $K_n$, $P_n$ and $C_n$. 
The case of $P_n$ for which we have given a complete solution and the case of $K_n$ should be regarded 
as extreme cases. Assuming increasing energy levels and initial densities, 
the nonlocal problem requires evaluation of the objective functional at a super-exponential number of 
points in the number of levels $n$, while for the local problem the solution is always the uniform 
distribution.  We show in Fig.~\ref{fig:local-non} the optimal population 
distribution with $w=(1,2,3,4)$, $\rho_0 \propto (e^1,e^2,e^3,e^4)$ 
in the cases of $K_4$, $P_4$ (with 
operators $B_{12},B_{23},B_{34}$) and $C_4$ (with added operator $B_{14}$). 

\begin{figure}
\centering
\includegraphics[width=70mm]{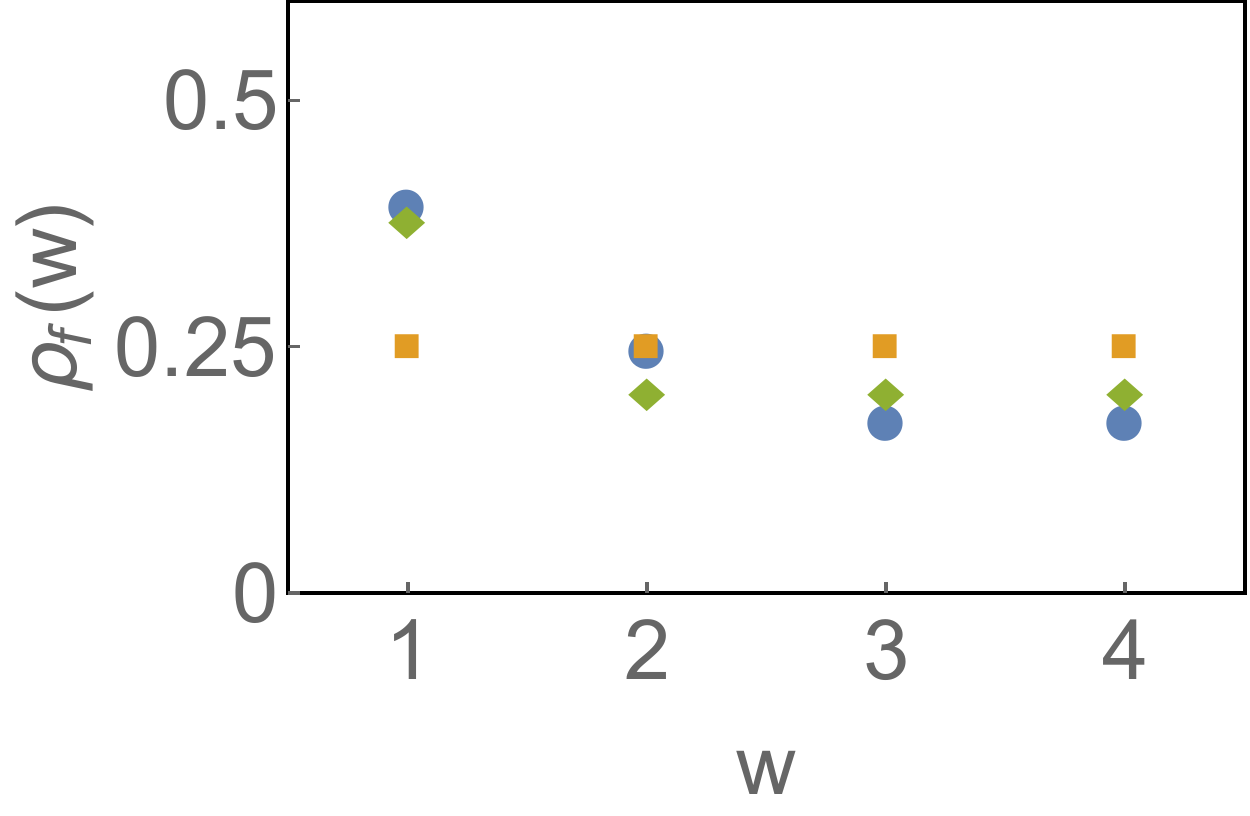}
\caption{Comparison of minimal energy states for the diffusion problem with initial data 
$\rho_0 \propto (e^1,e^2,e^3,e^4)$  and $w=(1,2,3,4)$. The blue circles label the level densities 
for the nonlocal problem ($K_4$), the green diamonds for the case $C_4$ and the yellow squares for the case 
$P_4$. Whereas in the nonlocal case the operations recovered $68\%$ of the Gardner limit,\cite{hay15} the $C_4$ case recovered $63\%$, and in the $P_4$ case  only  $50\%$ was recovered.\label{fig:local-non}}
\end{figure}

Although the  diffusion problem on the complete graph is well-characterized, it is difficult to solve.  
The $P_n$ model is an oversimplification, but it is possible that not much more complicated models may 
be useful. Some insight may be gained into the `full' alpha particle diffusion problem by considering 
proximity in both position and energy space. As before, edges correspond to diffusion paths and 
vertices reference bins associated with the discretization of configuration and energy space. 
A particularly simple 2-D diffusion problem in $x$ and $\epsilon$ can be visualized as 
in Fig.~\ref{fig:diffusionpath}, on the composition of path graphs $P_m[P_n]$, $m\geq n$, 
essentially a grid graph with diagonal edges.\cite{mathworldcomp} 
By judicious choice of the wave phase velocities, diffusion paths can be established between different parts of the phase space. Moving a particle in energy at constant location could be accomplished with a wave $k\to0$, and vice versa, moving a particle without changing its energy with a wave $\omega\to 0$. The slopes of any diagonal diffusion paths are determined by intermediate $\omega/k$.\cite{herrmann97,fisch_twowave}
Such an arrangement is characterized by degeneracy in both position and energy bins. A practical alpha 
channeling scheme would seek to maximize the density at a particular low-energy, large-radius sink node.

\begin{figure}
\begin{tikzpicture}[scale=1,transform shape]
{
    \draw [<->,thick] (0,4) node (yaxis) [above] {$x$}
        |- (6,0) node (xaxis) [right] {$\epsilon$};
\filldraw 
(1,1) circle (2pt) node[align=left,   below] {} --
(3,1) circle (2pt) node[align=center, below] {} --     
(5,1) circle (2pt) node {} --
(5,3) circle (2pt) node {}--
(3,3) circle (2pt) node {}--
(1,3) circle (2pt) node {};

\coordinate (x) at (1,3);
\coordinate (y) at (1,1);
\draw[black] (x) -- (y);

\coordinate (x) at (1,1);
\coordinate (y) at (3,3);
\draw[black] (x) -- (y);

\coordinate (x) at (1,3);
\coordinate (y) at (3,1);
\draw[black] (x) -- (y);

\coordinate (x) at (3,1);
\coordinate (y) at (5,3);
\draw[black] (x) -- (y);

\coordinate (x) at (3,3);
\coordinate (y) at (5,1);
\draw[black] (x) -- (y);

\coordinate (x) at (3,1);
\coordinate (y) at (3,3);
\draw[black] (x) -- (y);
}\end{tikzpicture}
\caption{Diffusion problem on $G = P_3[P_2]$. \label{fig:diffusionpath}}
\end{figure}
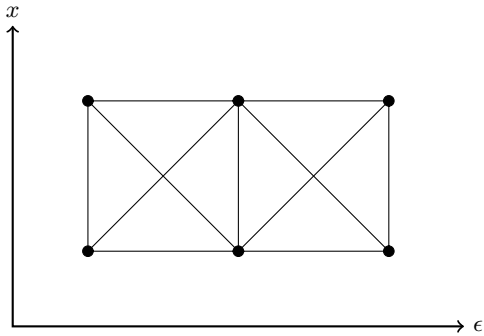

In general, one might consider diffusion problems on arbitrary subgraphs of the 
complete graph for $V=\{w\}$, of which the local problem is one solvable case (the path graph with $V=\{w\}$). 
These other problems will inherit some extreme points from the complete graph on the same number of vertices 
and should also have some unique ones depending on the particular graph structure. There may be rich 
physical significance for diffusion problems on more or less `bottlenecked' graphs generally (in the sense of Cheeger), 
or e.g. the wheel graphs, $k$-regular trees, and complete bipartite graphs.


Our problem may be contrasted with other notions of (deterministic or stochastic) diffusion or spreading on graphs, 
for example in the context of spreading of epidemics,\cite{keeling,pastor01,lloyd01} or behavior.\cite{christakis}  
In all such settings, the resulting behavior depends intimately on the character of the underlying graph. 
For example, disease transmission proceeds more slowly on a lattice than on small-world\cite{watts_strogatz,djwatts} 
or scale-free\cite{barabasi,pastor01} networks. Qualitative differences can also be observed in the context of classical 
graph diffusion.\cite{ref5} We expect to see similar differences in the context of our problem, though the 
computational task of verifying this is formidable. 


\begin{acknowledgments}
It is a pleasure to acknowledge discussions with Mariana Campos Horta. Particular thanks are due to 
Professor D. Tannor for critical discussions at an early stage of this work. Work supported by DOE 
Contract No. DE-AC02-09CH11466 and DOE NNSA SSAA Grant No. DE274-FG52-08NA28553. One of us (NJF) 
acknowledges the hospitality of the Weizmann Institute of Science, where he held a Weston Visiting 
Professorship during the time over which this work was initiated.
\end{acknowledgments}

\bibliography{bibliography}

\appendix*
\section{The diffusion polytope in the ``ordered'' case $P_n$}

In this appendix we prove the claim from Sec.~{\ref{sec:localstate}} concerning 
the diffusion polytope in the case $P_n$, with permitted operators 
$B_{12},B_{23},\ldots,B_{n-1,n}$. The local state space is the set of states accessible 
from the initial state $\rho_0$,  
after an arbitrary finite sequence of transformations $B_{i,i+1}$ has been applied. 
We claim that the closure of the convex hull of this space has $2^{n-1}$ extreme points 
in bijection with the power set of $\{1,2,\ldots,n-1\}$. Denote these points $S_A$, where the index $A$ runs 
over all subsets of $\{1,2,\ldots,n-1\}$. 

The bijection has a simple description. The empty set $\varnothing$ corresponds to the initial 
state $\rho_0$. The point $S_A$ corresponding to the set $A$ has the property that for each $i\in A$ 
the $i$'th and $(i+1)$'th components of $S_A$ are equal, and its components are obtained by averaging
over subsets of components of $\rho_0$. So, for example, 
in a seven-level system,  $S_{\{1,2,3,6\}}$ is the point
$(x,x,x,x,y,z,z)$ where $x$ is the average of the first four components of $\rho_0$, 
$y$ is the fifth component, and $z$ is the average of the sixth and seventh components. 
In greater generality, whenever $A$ contains a sequence of $k$ consecutive integers 
$i,i+1,\ldots,i+k-1$ then the $k+1$ components of $S_A$ from $i$ to $i+k$ are equal to the 
average of the corresponding components of $\rho_0$. 

\begin{lemma}
The $S_A$ are contained in the closure of the local state space.
\end{lemma}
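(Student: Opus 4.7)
The plan is to reduce the problem to a block-by-block argument driven by the structure of $A$. Decompose $A$ into its maximal runs of consecutive integers; each maximal run $\{i,i+1,\ldots,i+k-1\}\subseteq A$ determines a \emph{block} of consecutive vertices $\{i,i+1,\ldots,i+k\}$. Together with singleton blocks for indices not covered by any run, these blocks partition $\{1,2,\ldots,n\}$. By the description of $S_A$ given just before the lemma, each non-singleton block of $S_A$ is the average of the corresponding components of $\rho_0$, while singleton blocks inherit their component from $\rho_0$ unchanged. The goal is then to exhibit $S_A$ as a limit of states reachable by finite products of the $B_{i,i+1}$.

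The key structural observation is that for a non-singleton block $B=\{i,\ldots,i+k\}$, the operators $B_{i,i+1},\ldots,B_{i+k-1,i+k}$ have supports lying entirely inside $B$: they leave every component outside $B$ fixed, and they commute with every $B_{j,j+1}$ whose support lies in a different block. Hence operations on distinct blocks can be interleaved or concatenated without interference. Moreover, restricted to $B$ these operators are exactly the averaging operators of a connected path graph on $k+1$ vertices. The elementary argument already given in Section~\ref{sec:model} (the maximal gap $\max_{p,q}(\rho_p-\rho_q)$ strictly decreases under any averaging operation and cannot attain a nonzero minimum) then shows that for any $\epsilon>0$ a finite sequence of these within-block operators brings the populations on $B$ within $\epsilon$ of their common average, while leaving all components outside $B$ untouched.

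To assemble the pieces, fix $\epsilon>0$, choose such a finite $\epsilon$-approximating sequence for each non-singleton block, and concatenate these sequences in any order; by the disjoint-support commutativity above, the resulting state agrees with $\rho_0$ on singleton-block components and approximates the block average on each non-singleton block to within $\epsilon$. Thus the final state is within $O(\epsilon)$ of $S_A$ in the sup norm, and letting $\epsilon\to 0$ places $S_A$ in the closure of the local state space $A(\rho_0)$. I do not anticipate a serious obstacle: every ingredient has already been supplied by Section~\ref{sec:model}, and the only point that requires care is a clean formulation of the block decomposition together with the observation that operators supported on disjoint blocks commute, which is what allows the blocks to be handled independently.
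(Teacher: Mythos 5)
Your proof is correct and follows essentially the same route as the paper's: decompose $A$ into maximal runs of consecutive integers, note that operators supported on distinct blocks act independently, and drive each block to its average by a within-block argument. The only cosmetic difference is that the paper tracks the specific monotone quantity $\rho_{i+k}-\rho_i$ under an explicit left-to-right sweep $B_{i+k-1,i+k}\cdots B_{i,i+1}$, whereas you invoke the general max-gap argument from Section~\ref{sec:model}; both carry the same (minor, unaddressed) burden of upgrading ``strictly decreasing with no nonzero minimum'' to actual convergence.
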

 
\begin{proof}
Fix a subset $A\subseteq\{1,2,\ldots,n-1\}$ corresponding to a particular $S_A$. 
Suppose $i,i+1,\ldots,i+k-1$ is a maximal sequence of $k$ consecutive integers in $A$, i.e.
all these are in $A$, but $i-1$ and $i+k$ are not, and consider the quantity $\rho_{i+k} - \rho_i$. 
This is non-negative, and, if it is non-zero, strictly decreases when the operator 
$B_{i+k-1,i+k} \ldots B_{i+1,i+2} B_{i,i+1}$ is applied to the state $\rho$. Furthermore it is 
unchanged when the corresponding operator for a different maximal sequence of consecutive 
integers is applied to $\rho$. It follows that by repeated application of such operators 
the initial state $\rho_0$ can be brought arbitrarily close to the extreme point $S_A$.  
Thus the points $S_A$ are in the closure of the local state space. 
\end{proof}


\begin{lemma}
No point $S_A$ can be written as a nontrivial convex combination of the others.
\end{lemma}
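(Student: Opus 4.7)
The plan is to exhibit, for each subset $A\subseteq\{1,\ldots,n-1\}$, linear functionals on $\mathbb{R}^n$ that separate $S_A$ from the convex hull of the remaining $S_{A'}$. Two families suffice: the adjacent gaps $d_i(\rho)=\rho_{i+1}-\rho_i$ and the prefix sums $P_i(\rho)=\rho_1+\cdots+\rho_i$ for $i\in\{1,\ldots,n-1\}$. Throughout I assume $\rho_1<\rho_2<\cdots<\rho_n$ strictly; any coincidences among initial populations merely reduce the problem to a smaller, strictly ordered instance.

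First I would establish two key inequalities valid for every $A'\subseteq\{1,\ldots,n-1\}$ and every admissible $i$: (i) $d_i(S_{A'})\ge 0$, with equality iff $i\in A'$; and (ii) $P_i(S_{A'})\ge P_i(\rho_0)$, with equality iff $i\notin A'$. Item (i) is read directly off the block structure of $S_{A'}$: when $i\in A'$ the $i$th and $(i+1)$th components lie in the same block and coincide; when $i\notin A'$ they lie in consecutive blocks whose averages are strictly ordered by the monotonicity of $\rho_0$. For (ii), if $i\notin A'$ then the partition underlying $S_{A'}$ cuts between $i$ and $i+1$, so the first $i$ components of $S_{A'}$ are the block averages of $\rho_1,\ldots,\rho_i$ and sum to $P_i(\rho_0)$ exactly. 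If instead $i\in A'$ and $\{a,\ldots,b\}$ is the block containing $i$ (so $a\le i<b$), a short calculation gives $P_i(S_{A'})-P_i(\rho_0)=(i-a+1)\bigl(\bar\rho_B-\bar\rho_{[a,i]}\bigr)$, where $\bar\rho_B$ and $\bar\rho_{[a,i]}$ denote averages of $\rho_0$ over $\{a,\ldots,b\}$ and $\{a,\ldots,i\}$; strict monotonicity of $\rho_0$ together with $i<b$ makes this quantity strictly positive.

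With (i) and (ii) in hand the conclusion is quick. Suppose $S_A=\sum_{A'\ne A}\lambda_{A'}S_{A'}$ with $\lambda_{A'}\ge 0$ and $\sum\lambda_{A'}=1$. For each $i\in A$ the equality $0=d_i(S_A)=\sum\lambda_{A'}d_i(S_{A'})$ together with (i) forces $i\in A'$ whenever $\lambda_{A'}>0$, giving $A\subseteq A'$. For each $i\notin A$ the equality $P_i(\rho_0)=P_i(S_A)=\sum\lambda_{A'}P_i(S_{A'})$ together with (ii) forces $i\notin A'$ whenever $\lambda_{A'}>0$, giving $A'\subseteq A$. Combining, $A'=A$ for every $A'$ with positive weight, contradicting nontriviality.

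The main obstacle is the strict inequality in case $i\in A'$ of (ii): one must compare the block average over $\{a,\ldots,b\}$ to the partial average over $\{a,\ldots,i\}$ and argue strict dominance. This is the only place where strict monotonicity of $\rho_0$ is really used; everything else follows by straightforward bookkeeping of the block-averaging operation.
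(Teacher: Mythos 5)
Your proof is correct, and it takes a genuinely different --- and in fact more watertight --- route than the paper's. The paper's argument uses only what you call the gap functionals $d_i$: since every $S_{A'}$ has nondecreasing components, a vanishing gap in a convex combination forces that gap to vanish in every summand with positive weight. This yields $A\subseteq A'$ for every contributing $A'$ (the contributing sets all contain $A$, and their intersection is $A$), and the paper then immediately concludes that the only possible summand is $S_A$ itself. That last step is not spelled out: the gap argument alone does not exclude a combination of several $S_{A'}$ with $A'\supsetneq A$. Your second family of functionals, the prefix sums $P_i$, supplies exactly the missing reverse inclusion $A'\subseteq A$: $P_i$ is nondecreasing under every $B_{j,j+1}$, equals $P_i(\rho_0)$ precisely when the block structure of $S_{A'}$ cuts between $i$ and $i+1$, and is strictly larger otherwise, so the equality $P_i(S_A)=P_i(\rho_0)$ for $i\notin A$ forces $i\notin A'$ for every contributing $A'$. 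Together the two families pin down $A'=A$, which is a cleaner and more complete separation argument; an alternative way to close the same gap would be to observe that $\sum_j\rho_j^2$ strictly decreases under nontrivial averaging and is strictly convex, so $S_A$ cannot lie in the convex hull of the $S_{A'}$ with $A'\supsetneq A$. One small caveat, which applies equally to the paper: your dismissal of ties in $\rho_0$ as ``reducing to a smaller strictly ordered instance'' deserves a sentence of justification, since with ties the map $A\mapsto S_A$ need not be injective and the statement can only hold up to that identification.
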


\begin{proof}
Consider a convex combination $\sum a_A S_A$, with $a_A\geq 0$ and $\sum a_A=1$. Let $\rho_{A,i}$ denote 
the component densities of the point $S_A$. Suppose $\sum a_A S_A= S$, with densities $\rho_i$. Because an arbitrary 
sequence of local operators $B_{i,i+1}$ cannot achieve a population inversion, we have 
$\rho_{A,i}\leq \rho_{A,i+1}$ for all $A$ and $i$. Therefore $\rho_i=\rho_{i+1}$ if and only if $\rho_{A,i} = \rho_{A,i+1}$
for all $A$ with $a_A>0$. Thus any convex combination yielding the point $S$ has only one summand, $S$ itself.
\end{proof}

\begin{lemma}
Any point in the local state space can be written as a convex combination of the $S_A$.
\end{lemma}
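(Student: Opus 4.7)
The plan is to characterize $P := \operatorname{conv}(\{S_A\})$ by inequalities and then verify that each reachable state satisfies them. Define the polytope
\[
Q = \left\{ \rho \in \mathbb{R}^n : \textstyle\sum_i \rho_i = 1,\ \rho_i \leq \rho_{i+1} \text{ for } i = 1,\ldots,n-1,\ T_k(\rho) \geq T_k(\rho_0) \text{ for } k = 1,\ldots,n-1 \right\},
\]
where $T_k(\rho) := \sum_{l=1}^k \rho_l$. The lemma follows once I establish $L \subseteq Q = P$, where $L$ denotes the local state space.

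First I would verify that the three families of $Q$-inequalities are invariants of the diffusion, giving $L \subseteq Q$. Sum conservation is immediate. Monotonicity is preserved by $B_{i,i+1}$ because $\rho_{i-1} \leq (\rho_i+\rho_{i+1})/2 \leq \rho_{i+2}$ whenever $\rho$ is non-decreasing. And $T_k$ is left unchanged by $B_{i,i+1}$ for $k \neq i$, while $T_i \to T_i + (\rho_{i+1}-\rho_i)/2 \geq T_i$. A short induction on the number of applied operators completes this.

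Next I would show $Q = P$. The inclusion $P \subseteq Q$ is straightforward: Lemma 1 places each $S_A$ in $\overline{L}$, and the $Q$-inequalities, being closed conditions satisfied throughout $L$, extend to $\overline{L}$, so $S_A \in Q$; convexity of $Q$ then gives $P \subseteq Q$. For the reverse inclusion, I would enumerate the vertices of $Q$. Since $Q$ lies in an $(n-1)$-dimensional affine subspace (from $\sum_i \rho_i = 1$), any vertex has $n-1$ tight linearly independent inequalities. For each $A \subseteq \{1,\ldots,n-1\}$, imposing $\rho_i = \rho_{i+1}$ for $i \in A$ together with $T_k(\rho) = T_k(\rho_0)$ for $k \notin A$ gives $n-1$ constraints with unique solution $S_A$: tight monotonicity at indices in $A$ forces the block structure of $S_A$, and tight majorization at the complementary indices pins each block's value to the average of $\rho_0$ over that block.

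The main obstacle is ruling out additional vertices from overlapping tight constraints, namely when both $\rho_i = \rho_{i+1}$ and $T_i(\rho) = T_i(\rho_0)$ are tight at the same index $i$. Here I would combine these equations with the adjacent majorization inequalities $T_{i-1}(\rho) \geq T_{i-1}(\rho_0)$ and $T_{i+1}(\rho) \geq T_{i+1}(\rho_0)$ to derive $\rho_{0,i+1} \leq \rho_{i+1} = \rho_i \leq \rho_{0,i}$, forcing $\rho_{0,i} = \rho_{0,i+1}$ --- a contradiction under the generic assumption that $\rho_0$ is strictly increasing. In the degenerate case with equality $\rho_{0,i} = \rho_{0,i+1}$, the subsets $A$ and $A \cup \{i\}$ yield the same extreme point, so the identity $Q = P$ persists by a continuity argument approximating $\rho_0$ by strictly increasing initial populations.
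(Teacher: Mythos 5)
Your proof is correct in its essentials but takes a genuinely different route from the paper. The paper argues by induction on the number of applied operators: the key computation shows that for each point $S_A$ and each operator $B_{i,i+1}$, the image $S_A B_{i,i+1}$ is an explicit convex combination of at most four of the $S_{A'}$, with the convex weights found by solving a linear system and their nonnegativity verified by a fairly heavy calculation; linearity of $B_{i,i+1}$ then propagates the decomposition to all reachable states. You instead produce an inequality description of $\operatorname{conv}(\{S_A\})$ --- monotonicity together with the majorization-type conditions $T_k(\rho)\ge T_k(\rho_0)$ --- verify that these are invariants of the diffusion, and enumerate the vertices of the resulting polytope $Q$. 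This buys something the paper never states explicitly: a facet description of the diffusion polytope for ordered $P_n$, and it avoids the paper's coefficient computations entirely. The paper's approach, in exchange, is constructive (it exhibits the weights) and needs no genericity or perturbation step.

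Two small repairs are needed. First, before invoking ``$Q$ is the convex hull of its vertices'' you must note that $Q$ is bounded; it is, since $\rho_{0,1}\le\rho_1\le\cdots\le\rho_n$ and $\sum_i\rho_i=1$ confine all coordinates to $[\rho_{0,1},\,1-(n-1)\rho_{0,1}]$. Second, in the degenerate case $\rho_{0,i}=\rho_{0,i+1}$ your parenthetical claim that $S_A$ and $S_{A\cup\{i\}}$ coincide is false in general (the two blocks adjacent to position $i$ may contain further entries of $\rho_0$, so their averages need not agree); but that remark is not load-bearing. The limiting argument does work provided you choose the strictly increasing perturbation so that $T_k(\rho_0^\epsilon)\le T_k(\rho_0)$ for all $k<n$ (e.g.\ $\rho_{0,i}^\epsilon=\rho_{0,i}+\epsilon\bigl(i-\tfrac{n+1}{2}\bigr)$), which gives $Q\subseteq Q^\epsilon=\operatorname{conv}(\{S_A^\epsilon\})$ for every $\epsilon>0$, and then a compactness argument on the weight vectors passes the convex decomposition to the limit.
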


\begin{proof}
We wish to show that every accessible state $\rho$ can be written as a convex combination of the 
extreme points, i.e. that we  can write 
$$  \rho = \sum_A \lambda_A S_A  \ , \qquad  {\rm where~~} \lambda_A\ge 0 \ ,  \sum_A\lambda_A = 1 . $$ 
Every accessible state can be obtained by applying an arbitrary finite sequence of $B_{i,i+1}$ to the initial state 
$\rho_0 =  S_{\varnothing}$.  We proceed by induction on the number of $B_{i,i+1}$ operators to be applied. 
To prove the inductive step it is necessary to check that for every $A$ and for every $i$,  
$S_A B_{i,i+1}$ can be written as a convex combination of extreme points. If $i \in A$, then 
$S_A B_{i,i+1}=S_A$ is an extreme point. If $i\notin A$ then the $i$'th and $(i+1)$'th components of $S_A$ 
may be different. $S_A$ then takes  the form 
$$  ( \ldots, x, \ldots, x, x, y, y, \ldots  y , \ldots )\ .   $$ 
where there is a string of $k>0$ $x$'s  ending in the $i$'th position and 
a string of $l>0$ $y$'s starting in the $(i+1)$'th position. 
The entries on the left of  the string of $x$'s and on the right of the string of $y$'s remain 
fixed and identical for all the points that appear in the forthcoming calculation, and thus
do not play any role. It is assumed that the entry immediately on the left of the $x$'s (if there is 
such) is  strictly less than $x$, and the entry immediately on the right of the $y$'s 
(if it exists) is strictly greater than $y$.  

Applying $B_{i,i+1}$ we have 
$$ S_A B_{i,i+1} = 
( \ldots, x, \ldots, x,  \frac{x+y}{2}, \frac{x+y}{2}, y, \ldots,  y , \ldots )   
$$ 
where now the strings of $x$'s and $y$'s are of length $k-1$ and $l-1$ respectively. 
It should be emphasized that except in the case $k=l=1$ {\em this is not an extreme point},
as although the values in the $i$'th and $(i+1)$'th positions are equal, they have been 
determined by components of the initial density $\rho_0$ from outside these positions, and 
full averaging over the relevant subset has not been achieved. 
However we will now show it is a convex combination of four extreme points of the following form: 
\begin{align*}
\begin{split}
 S_1 = {}&  ( \ldots,   X , \ldots X,X,X,X,\ldots  X  , \ldots ),\\ {}&{\rm with}~ k+l~X\textrm{'s,}
 \end{split}\\[0.5ex]
\begin{split} S_2 = {}&  ( \ldots, X_1, \ldots, X_1, X_1,X_1, Y_1 , \ldots,  Y_1, \ldots ),\\   {}&{\rm with} ~  k+1~X_1\textrm{'s and } l-1~Y_1\textrm{'s,}  \end{split}\\[0.5ex]
\begin{split}
 S_3 = {}&  ( \ldots, X_2, \ldots, X_2, Y_2,Y_2, Y_2 , \ldots,  Y_2, \ldots ), \\  {}&{\rm with}~ k-1~X_2\textrm{'s and } l+1~Y_2\textrm{'s, and}\end{split}\\[0.5ex]
 S_4 = {}&  ( \ldots, X_2, \ldots, X_2, Z, Z, Y_1, \ldots,  Y_1, \ldots ),  \\ {}&{\rm with}~  k-1~X_2\textrm{'s and } l-1~Y_1\textrm{'s.}
\end{align*}
(These four points are only distinct on the assumption $k,l>1$. The cases $k=l=1$, $k=1,l>1$, and $k>1,l=1$ should be 
considered separately but are simpler --- in particular in the case $k=l=1$ all four points coincide and 
the resulting point is extreme. We omit the details of the cases $k=1,l>1$ and $k>1,l=1$.)  
The quantities $x,y,X,X_1,Y_1,X_2,Y_2,Z$ appearing here are not independent, as they are obtained 
from averaging over certain entries of $\rho_0$. Denote the average value of entries $i-(k-1),\ldots,i-1$ 
of $\rho_0$ as $R_1$, the value of entry $i$ as $R_2$, the value of entry $i+1$ as $R_3$ and the average 
of entries $i+2,\ldots,i+l$ as $R_4$. Then 
\begin{eqnarray*}
x &=&  \frac{(k-1)R_1 + R_2 }{k} \\
y &=&  \frac{R_3 + (l-1)R_4 }{l} \\
X  &=&  \frac{(k-1)R_1 + R_2 + R_3 + (l-1)R_4}{k+l} \\
X_1  &=& \frac{(k-1)R_1 + R_2 + R_3}{k+1} \\
Y_1  &=&  R_4 \\
X_2 &=&  R_1  \\
Y_2 &=&  \frac{R_2 + R_3 + (l-1)R_4}{l+1} \\ 
Z  &=& \frac{R_2+R_3}{2}
\end{eqnarray*} 
Our aim is to find $\lambda_1,\lambda_2,\lambda_3,\lambda_4\ge 0$ such that 
$$ S_A B_{i,i+1} =  \sum_{i=1}^4 \lambda_i S_i ~~ {\rm and} ~~ \sum_{i=1}^4 \lambda_i = 1     $$
Solving these linear equations gives three constraints between the $\lambda_i$, which can
be written in the form 
\begin{eqnarray*} 
\lambda_1 &=& C_1 \lambda_4 + D_1  \\
\lambda_2 &=& C_2 \lambda_4 + D_2  \\
\lambda_3 &=& C_3 \lambda_4 + D_3  
\end{eqnarray*} 
where $C_1,C_2,C_3,D_1,D_2,D_3$ are complicated expressions involving $k,l$ and $R_1,R_2,R_3,R_4$. 
Explicitly we have 
\begin{align*}
\begin{split}
C_1 ={}& 
\frac { \left( p_2 +2\,p_3  \right)  \left( p_2+2\, p_1 \right)  \left( k+l \right) }
      { 2 \left( (k-1)p_1+kp_2+(k+1)p_3 \right) } \\
{}&\times\frac{1}{(l+1)p_1 +lp_2+(l-1)p_3  }\end{split} \\[0.5ex]
C_2 ={}& 
- \frac {  \left( p_2 +2\,p_3  \right)  \left( k+1 \right) }
{ 2 \left( (k-1)p_1+kp_2+(k+1)p_3 \right)  } \\[0.5ex]
C_3 ={}& 
-\frac { \left( p_2+2\,p_1 \right)  \left( l+1 \right) }
       {  2\left((l+1)p_1 +lp_2 +(l-1)p_3 \right) }
\end{align*}
where 
\begin{eqnarray*}
p_1 &=& R_2-R_1  \\
p_2 &=& R_3-R_2  \\
p_3 &=& R_4-R_3  
\end{eqnarray*}
Since $R_1\le R_2 \le R_3 \le R_4$ we have $p_i\geq 0$ and thus 
$C_1>0$ and $C_2,C_3<0$. Thus for nonnegativity of the $\lambda_i$ we need 
\begin{gather}
\max\left(0, -\frac{D_1}{C_1}\right)     \le  \lambda_4 \le  \min\left( -\frac{D_2}{C_2}  , -\frac{D_3}{C_3} \right)\label{eq:ineq}
\end{gather}
The explicit expressions for $-\frac{D_1}{C_1} , -\frac{D_2}{C_2}  , -\frac{D_3}{C_3} $ are
\begin{align*}
-\frac{D_1}{C_1} ={}& 
\big((k-1)lp_1p_2 +klp_2^2 \\
{}&+k(l-1)p_2p_3 -2(l+k)p_1p_3 \big)\\
{}&\times\frac{1}{ \left( p_2+2\,p_3 \right)  \left( p_2+2 \,p_1 \right) kl}  \\[0.5ex]
-\frac{D_2}{C_2} ={}& 
\frac {(k-1)lp_1 +klp_2 +k(l-1)p_3 }
{ \left(p_2+2p_3 \right) kl} \\[0.5ex]
-\frac{D_3}{C_3} ={}& 
\frac {(k-1)lp_1 + klp_2 +k(l-1)p_3 }
{ \left(p_2+2p_1 \right) kl}
\end{align*}
Evidently $-\frac{D_2}{C_2},-\frac{D_3}{C_3}>0$.  
To see the inequalities \eqref{eq:ineq} can be satisfied 
we simply observe that 
\begin{align*}
-\frac{D_2}{C_2} - {}& \left( -\frac{D_1}{C_1}\right) =\\ 
{}&\frac {2p_1\, \left( (k-1)p_1+kp_2+(k+1)p_3 \right) }
{k \left( p_2+2p_3 \right)  \left( p_2+2p_1 \right) }
 > 0  \\[0.5ex]
-\frac{D_3}{C_3} -  {}&\left( -\frac{D_1}{C_1}\right) = \\
{}&\frac {2p_3\, \left( (l+1)p_2+lp_2+(l-1)p_3\right) }
{l \left( p_2+2p_3 \right)  \left( p_2+2p_1 \right) }
 > 0
\end{align*}
Thus both quantities on the rightmost side of \eqref{eq:ineq} are greater than 
both quantities on the leftmost side and thus the inequalities \eqref{eq:ineq} can  
always be satisfied.  Note that the quantity $-\frac{D_1}{C_1}$ is of indeterminate sign. 
When it is negative it is possible to take $\lambda_4=0$ and thus write 
$S_A B_{i,i+1}$ as a convex combination of just three extreme points. 
However, in general, four are needed.  
\end{proof}

\begin{theorem}
The closure of the convex hull of the local state space has $2^{n-1}$ extreme points in bijection with the power set of 
$\{1,2,\ldots,n-1\}$. 
\end{theorem}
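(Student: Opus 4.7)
The plan is to deduce the theorem as a clean packaging of the three preceding lemmas, using only elementary convex analysis. Let $L$ denote the local state space (states reachable from $\rho_0$ by finitely many $B_{i,i+1}$), and let $\mathcal{S} = \{S_A : A \subseteq \{1,\ldots,n-1\}\}$, a finite collection of $2^{n-1}$ points indexed by subsets.

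First I would establish the identification $\overline{\mathrm{conv}(L)} = \mathrm{conv}(\mathcal{S})$. Lemma 1 gives $\mathcal{S} \subseteq \overline{L} \subseteq \overline{\mathrm{conv}(L)}$, and hence $\mathrm{conv}(\mathcal{S}) \subseteq \overline{\mathrm{conv}(L)}$. Conversely, Lemma 3 gives $L \subseteq \mathrm{conv}(\mathcal{S})$; since $\mathcal{S}$ is a finite subset of $\mathbb{R}^n$, $\mathrm{conv}(\mathcal{S})$ is compact and in particular closed, so the reverse inclusion $\overline{\mathrm{conv}(L)} \subseteq \mathrm{conv}(\mathcal{S})$ follows by taking convex hulls and then closures.

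Second I would read off the extreme points. By the standard fact that the extreme points of the convex hull of a finite set are a subset of that set, every extreme point of $\overline{\mathrm{conv}(L)}$ must be some $S_A$. Lemma 2 supplies the reverse inclusion by ruling out that any $S_A$ be a nontrivial convex combination of the remaining $S_{A'}$; hence each $S_A$ is in fact extreme. The extreme points are therefore exactly the $2^{n-1}$ members of $\mathcal{S}$, in the claimed bijection with subsets of $\{1,\ldots,n-1\}$. I would also remark that this forces $\overline{\mathrm{conv}(L)}$ to be a genuine polytope, so ``closure'' in the statement amounts only to adjoining the asymptotic extreme points already described.

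The substantive work is not at this packaging step but inside Lemma 3, whose inductive reduction of $S_A B_{i,i+1}$ to a convex combination of four explicit extreme points $S_1,\ldots,S_4$ requires both the combinatorial identification of the candidates and the delicate verification that the resulting linear system for $\lambda_1,\ldots,\lambda_4$ admits a nonnegative solution. That inequality check is the main obstacle in the overall argument; once it is in hand, the theorem itself falls out immediately from the two-step packaging above.
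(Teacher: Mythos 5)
Your proposal is correct and follows essentially the same route as the paper's own proof: both deduce $\overline{\mathrm{conv}(L)}=\mathrm{conv}(\{S_A\})$ by combining Lemma 1 with Lemma 3 (using closedness of the convex hull of a finite point set) and then invoke Lemma 2 to certify that every $S_A$ is genuinely extreme. The only difference is presentational --- you make explicit the standard fact that extreme points of the hull of a finite set lie in that set --- so no further comment is needed.
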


\begin{proof}
By \textbf{Lemma 1}, the $S_A$ are all contained in the closure of the local state space and thus in 
the closure of its convex
hull. Thus the convex hull of the points $S_A$ is contained in the closure of the convex hull 
of the local state space. 
By \textbf{Lemma 2}, no point $S_A$ can be expressed as a nontrivial convex combination of the 
others. Thus the convex hull of the points $S_A$ has all the $2^{n-1}$ points $S_A$ as extreme points. 
By \textbf{Lemma 3}, any point in the local state space can be expressed as a convex combination of the $S_A$.
Thus the local state space is contained in the convex hull of the points $S_A$, which is a closed convex set, so 
must therefore also contain the closure of the convex hull of the local state space. 
Combining these results we conclude that the closure of the convex hull of the local state space can be identified
with the convex hull of the points $S_A$ and these are a complete set of extreme points. 
\end{proof}

\end{document}